\documentclass[letterpaper, 10 pt, conference]{ieeeconf}
\IEEEoverridecommandlockouts
\usepackage{graphicx}
\usepackage{amsmath,amssymb}
\usepackage{flushend}
\usepackage{color}
\usepackage{latexsym}\usepackage{comment}
\usepackage{url}
\usepackage{graphics}
\usepackage{pdfsync}
\usepackage{stfloats}
\usepackage{epsfig}
\usepackage{times}
\usepackage{amssymb}
\usepackage{amsmath}
\usepackage{amsfonts}
\usepackage{mathtools}
\usepackage{textcomp}
\usepackage{xcolor}
\usepackage{indentfirst}
\usepackage{epstopdf}
\usepackage{multirow}
\usepackage{float}
\usepackage{color}
\usepackage{enumerate}
\usepackage{subfigure}
\usepackage{comment}
\usepackage{cite}
\usepackage{caption}
\usepackage{bm}

\usepackage[ruled, vlined]{algorithm2e}
\usepackage{algorithmic}

\DeclareMathOperator*{\minimize}{minimize}
 % REQUIRE-->INPUT
 % ENSURE-->OUTPUT

\usepackage{tcolorbox}
%\renewcommand{\algorithmcfname}{Procedure }
%\floatname{algorithm}{Procedure}

\title{\LARGE \bf Data-Driven Min-Max MPC for LPV Systems with Unknown Scheduling Signal}

\author{Yifan~Xie, Julian~Berberich, Felix~Br\"{a}ndle, Frank Allg\"{o}wer
	\thanks{F. Allg\"{o}wer is thankful that his work was funded by Deutsche Forschungsgemeinschaft (DFG, German Research
Foundation) under Germany’s Excellence Strategy - EXC 2075 - 390740016 and under grant 468094890.
F. Allg\"{o}wer acknowledges the support by the Stuttgart
Center for Simulation Science (SimTech).
The authors thank the International Max Planck Research School
for Intelligent Systems (IMPRS-IS) for supporting Yifan Xie and Felix Br\"{a}ndle.
	}
	\thanks{Yifan~Xie, Julian~Berberich, Felix~Br\"{a}ndle, Frank Allg\"{o}wer are with the Institute for Systems Theory and Automatic Control, University of Stuttgart, 70550 Stuttgart, Germany.
		{\tt\small  (email: \{yifan.xie, julian.berberich, felix.braendle, frank.allgower\}@ist.uni-stuttgart.de}). }
}

\newtheorem{mythm}{Theorem}

\newtheorem{mylem}{Lemma}

\newtheorem{myexm}{Example}
\newtheorem{remark}{Remark}
\newtheorem{assum}{Assumption}

\begin{document}
	
	\maketitle

\begin{abstract}
This paper presents a data-driven min-max model predictive control (MPC) scheme for linear parameter-varying (LPV) systems.
Contrary to existing data-driven LPV control approaches, we assume that the scheduling signal is unknown during offline data collection and online system operation.
Assuming a quadratic matrix inequality (QMI) description for the scheduling signal, we develop a novel data-driven characterization of the consistent system matrices using only input-state data.
The proposed data-driven min-max MPC minimizes a tractable upper bound on the worst-case cost over the consistent system matrices set and over all scheduling signals satisfying the QMI.
The proposed approach guarantees recursive feasibility, closed-loop exponential stability and constraint satisfaction if it is feasible at the initial time.
We demonstrate the effectiveness of the proposed method in simulation.
\end{abstract}

\section{Introduction}\label{sec:1}
Using data to design or improve controllers is becoming increasingly important with the growing complexity of engineering systems. With this motivation, various frameworks have been proposed to design data-driven controllers, e.g., data informativity \cite{van2023informativity} and Willems' Fundamental Lemma \cite{willems2005note,berberich2024overview,markovsky2021behavioral,markovsky2023datapower}. 
While these methods were initially focused on linear time-invariant systems, they have been extended to classes of nonlinear systems \cite{martin2023guarantees,berberich2022linear}.
Linear parameter-varying (LPV) systems are linear systems involving a time-varying scheduling signal and they provide a powerful framework for addressing nonlinear analysis and control problems using linear methods \cite{toth2010modeling}.

Several data-driven controller design methods have been proposed to stabilize LPV systems using noise-free data \cite{verhoek2022direct} or noisy data \cite{mejari2023data,miller2022data,verhoek2024decoupling}. 
Performance metrics such as infinite horizon quadratic cost \cite{verhoek2022direct},  
$\mathcal{H}_2$ norm  \cite{miller2022data,verhoek2022direct}, and $\ell_2$ gain \cite{verhoek2022direct} are considered in these approaches. 
Some of them rely on a characterization of all LPV systems consistent with measurements of input, state, and parameter signals, and they design a gain-scheduling controller to guarantee stability \cite{miller2022data,verhoek2024decoupling}.
A robust data-driven controller design for linear system with nonlinear uncertainties is proposed in \cite{berberich2023combining}, which encompasses LPV systems as a special case.
Additionally, data-driven model predictive control (MPC) for LPV systems has been developed based on Willems' Fundamental Lemma \cite{verhoek2023linear}.
However, all of the aforementioned works assume that scheduling signal is known during offline data collection, which may not always hold in practical data-driven settings.

We propose a data-driven min-max MPC scheme for LPV systems using only a sequence of input-state data. 
Throughout the paper, we assume that the system matrices and the scheduling signal of the LPV system are unknown, but the latter satisfies a known quadratic matrix inequality (QMI). 
Based on this QMI, we characterize all system matrices explaining the available input-state data.
%The proposed data-driven min-max MPC problem is recursively feasible if it is initially feasible.
The proposed scheme guarantees that all systems consistent with the data are exponentially stabilizes and satisfy the constraints for all scheduling signals satisfying the QMI.
A numerical example shows the effectiveness of the proposed method.
The proposed approach is inspired by existing min-max MPC schemes \cite{kothare1996robust,lu2000quasi,morato2020model}, which only consider the case of known system dynamics.
Moreover, since our min-max MPC relies on the repeated computation of a robust feedback law, we also solve the open problem of data-driven feedback design for LPV systems with unknown scheduling signals.

The problem setup is introduced in Section \ref{sec:2}.
The data-driven characterization of the system matrices is proposed in Section \ref{sec:3}.
Based on this characterization, the data-driven min-max MPC scheme is presented in Section \ref{sec:4}.
We prove that the MPC scheme is recursively feasible and guarantees closed-loop exponential stability and constraint satisfaction.
The numerical example is presented in Section \ref{sec:5} and the paper is concluded in Section \ref{sec:6}.

\textit{Notation:} We denote the set of integers in the interval $[a, b]$ by $\mathbb{I}_{[a, b]}$ , the set of natural numbers by $\mathbb{N}$, and the identity matrix by $I$.
For a matrix $P$, we write $P\succ 0$ if it is positive and $P\succeq 0$ if it is positive-definite. 
For a vector $x$ and a matrix $P\succeq 0$, we write $\|x\|_P^2=x^\top P x$.
For matrices $A, B$ of compatible dimensions, we abbreviate $A^\top BA$ by $[\star]^\top BA$.

\section{Problem Setup}\label{sec:2}
We consider a discrete-time LPV system
%as an linear time-invariant (LTI) system of the form in \cite[Sec. 9.3]{scherer2000linear}
\begin{equation}\label{system}
\begin{aligned}
x_{t+1}&=A_sx_t+B_su_t+\omega_t,\\
z_t&=Cx_t+Du_t,\\
\omega_t&=\Delta_tz_t,
\end{aligned}
\end{equation}
where $x_t\in\mathbb{R}^{n_x}$ is the state, and $u_t\in\mathbb{R}^{n_u}$ is the control input.
The scheduling signal $\Delta_t\in\mathbb{R}^{n_x\times n_z}$ enters the system dynamics from $z_t$ to $\omega_t$, where $z_t\in\mathbb{R}^{n_z}$ and $\omega_t\in\mathbb{R}^{n_x}$.
We assume that the matrices $A_s\in\mathbb{R}^{n_x\times n_x}$ and $B_s\in\mathbb{R}^{n_x\times n_u}$ are unknown.
The matrices $C\in\mathbb{R}^{n_z\times n_x}$ and $D\in\mathbb{R}^{n_z\times n_u}$ are assumed to be known and they express a form of prior knowledge on which states and inputs are affected by $\Delta_t$.
%, i.e., which state and input components are multiplied by the unknown scheduling signal.
Moreover, $\Delta_t$ is not available, contrary to the existing literature on data-driven control for LPV systems \cite{verhoek2022direct,mejari2023data,miller2022data,verhoek2024decoupling,berberich2023combining}.

\begin{remark}\upshape
LPV systems provide a powerful framework for modeling nonlinear systems by viewing nonlinear components as scheduling signals \cite{toth2010modeling}.
In this case, when the nonlinearity is unknown, the corresponding scheduling signal $\Delta_t$ is also unknown, motivating the considered setup.
Furthermore, $\Delta_t$ can also be interpreted as a multiplicative noise term, whereas existing works on data-driven control discussed in Section \ref{sec:1} focus on additive noise. 
We assume for simplicity that $w_t=\Delta_tz_t$ enters the dynamics directly. Including additional structure, e.g., considering $w_t=B_\Delta \Delta_t z_t$ with a known matrix $B_\Delta$, is an interesting future research direction, see, e.g., \cite{berberich2023combining} for analogous results for LTI systems.
\end{remark}

We use the following example to demonstrate the flexibility of the system class \eqref{system}.
\begin{myexm}\upshape
The unbalanced disc system is a common example from the LPV literature \cite{verhoek2023linear} and is modeled as
\begin{equation}\nonumber
\begin{aligned}
\begin{bmatrix}
x_{1,t+1}\\
x_{2,t+1}
\end{bmatrix}\!\!=\!\!
\begin{bmatrix}
1\!\!-\!\!\frac{T_s}{\tau} \!\!\!\!&0\\
1 \!\!\!\!&T_s\end{bmatrix}\!\!\!
\begin{bmatrix}
x_{1,t}\\
x_{2,t}
\end{bmatrix}\!\!+\!\!\begin{bmatrix}
\frac{T_sK_m}{\tau}\\ 0
\end{bmatrix}u_t\!\!+\!\!
\begin{bmatrix}
    -\frac{T_smgl\sin(x_{2,t})}{J}\\0
\end{bmatrix},
\end{aligned}
\end{equation}
where $x_{1,t}$ is the angular velocity of the disc and $x_{2,t}$ is the angular position of the disc.
This system can be formulated as in \eqref{system} by defining the unknown matrices
\[A_s\!=\!\begin{bmatrix}
1\!-\!\frac{T_s}{\tau} &0\\
1 &T_s\end{bmatrix}, B_s\!=\!\begin{bmatrix}
\frac{T_sK_m}{\tau}\\ 0
\end{bmatrix}, \Delta_t\!=\!\begin{bmatrix}
    -\frac{T_smgl\sin(x_{2,t})}{Jx_{2,t}}\\0
\end{bmatrix}\] 
and the known matrices $C=\begin{bmatrix}
    0 &1
\end{bmatrix}$ and $D=0$. 
The scheduling signal $\Delta_t$ is unknown when the nonlinearity is not precisely known.
\end{myexm}

%\FB{I think we have to adapt the conditions on $G_{11}$ and $G_{22}$ to be as in Lemma 1. You need that anyway to have a bounded uncertainty with an interior.}

Throughout this paper, we assume that a sequence of input-state measurements $(U, X)$ of length $T$ is available.
Using $(U, X)$ and the available matrices $C$ and $D$, we can calculate a sequence of $z_i^d$, $\forall i\in\mathbb{I}_{[0, T-1]}$.
The available measurements are then collected in the following matrices
\vspace{-15pt}
\begin{subequations}\label{data}
\begin{align}
U&:=\begin{bmatrix}u_0^d \!\!&\ldots \!\!&u_{T-1}^d\end{bmatrix},
X:=\begin{bmatrix}x_0^d \!\!&\ldots \!\!&x_{T-1}^d \!\!&x_T^d\end{bmatrix},\label{data:input-state}\\
Z&:=\begin{bmatrix}Cx_0^d+Du_0^d &\ldots &Cx_{T-1}^d+Du_{T-1}^d\end{bmatrix}.\label{data:z}
\end{align}
\end{subequations}
The corresponding scheduling signals are denoted by $\Delta_i^d$ for all $i\in\mathbb{I}_{[0, T-1]}$, which are not available to us.

The goal of this paper is to construct a time-varying state-feedback controller $u_t=F_tx_t$ only using the collected data in \eqref{data}.
This controller should ensure that, for all scheduling signals $\Delta_t$ satisfying a certain bound (Assumption~\ref{assumption:delta} below) and all system consistent with the data, the controlled system \eqref{system} is exponentially stabilized to the origin and satisfies the constraints.
To stabilize the origin and quantify the closed-loop performance, we define a quadratic stage cost function $l(u, x)=\|x\|_Q^2+\|u\|_R^2$, where $Q, R\succ 0$.
We consider ellipsoidal input and state constraints in the data-driven min-max MPC scheme
$\|u_t\|_{S_u}\leq 1, \|x_t\|_{S_x}\leq 1, \forall t\in\mathbb{N}$, where $S_u\succ 0, S_x\succeq 0$.
In order to obtain rigorous guarantees despite the unknown $\Delta_t$, we assume that the following bound holds during both offline data generation and online operation.

\begin{assum}\upshape\label{assumption:delta}
The scheduling signals satisfy $\Delta_i^d\in\Pi, \forall i\in\mathbb{I}_{[0, N-1]}$ and $\Delta_t\in\Pi, \forall t\in\mathbb{N}$, where
\begin{equation}\label{assum:delta}
\Pi=\left\{\Delta:
\begin{bmatrix}
I \\ \Delta^\top
\end{bmatrix}^\top 
G
\begin{bmatrix}
I \\ \Delta^\top
\end{bmatrix}\succeq 0\right\}
\end{equation}
and $G=\begin{bmatrix}G_{11} &G_{12}\\G_{12}^\top &G_{22}\end{bmatrix}$ is a known invertible matrix with $G_{22}\prec 0$ and $G_{11}-G_{12}G_{22}^{-1}G_{12}^\top\succ 0$.
\end{assum}

For example, if a bound $\sigma$ on the maximum singular value of $\Delta_t$ is known, then $G$ can be chosen as $G_{11}=I, G_{12}=0, G_{22}=-\sigma^{-2} I$.

\section{Data-Driven System Characterization}\label{sec:3}

In this section, we present a data-driven characterization of the unknown system matrices $A_s$ and $B_s$ using the collected input-state measurements in \eqref{data}.
%This provides the basis for designing a data-driven min-max MPC controller in the next section.

The set of all pairs $(A, B)$ consistent with the input-state data in \eqref{data} and the bound on the scheduling signal in Assumption \ref{assumption:delta} is defined as 
\[\Sigma=\{(A, B):(A, B)\in\Sigma_i, \forall i\in\mathbb{I}_{[0, T-1]}\},\]
where 
\begin{equation}\nonumber
\Sigma_i\!=\!\{\!(A, B):\!
    \exists \Delta_i^d\in\Pi  \text{ s.t. } 
    x_{i+1}^d=Ax_i^d+Bu_i^d+\Delta_i^d z_i^d\}.
\end{equation}

We first state the following technical lemma that will be used in the data-driven characterization of the set $\Sigma$.
\begin{mylem}\upshape\label{lemma1}
Suppose $E\in\mathbb{R}^{n_E \times n_x}$ has full row rank and
$\tilde{G}=\begin{bmatrix}\tilde{G}_{11} &\tilde{G}_{12}\\\tilde{G}_{12}^\top &\tilde{G}_{22}\end{bmatrix}$
satisfies 
$\tilde{G}_{22}\prec0$ and $\tilde{G}_{11}-\tilde{G}_{12} \tilde{G}_{22}^{-1}\tilde{G}_{12}^\top \succ 0$, then the following two sets are equal
\begin{equation}\label{lemma1:equ1}
S_{E\tilde{\Delta}}\coloneq
\left\{ E\tilde{\Delta} :
\begin{bmatrix} I \\ \tilde{\Delta} \end{bmatrix}^\top
\tilde{G}
\begin{bmatrix} I \\ \tilde{\Delta} \end{bmatrix}
\succeq 0
\right\}
\end{equation}
%is equal to 
\begin{equation}\label{lemma1:equ2}
S_{\hat{\Delta}}\coloneq\left\{\hat{\Delta} :  
\begin{bmatrix} I \\ \hat{\Delta} \end{bmatrix}^\top
Y
\begin{bmatrix} I \\ \hat{\Delta} \end{bmatrix}
\succeq 0
\right\}
\end{equation}
with
\begin{equation*}
\begin{aligned}
    Y\!=\! \begin{bmatrix}
     \star
\end{bmatrix}^\top\!\!
\begin{bmatrix}
     \tilde{G}_{11}-\tilde{G}_{12} \tilde{G}_{22}^{-1}\tilde{G}_{12}^\top \!\!\!\!\!\!\!\!\!& 0 \\
     0 \!\!\!\!\!\!\!\!\!& (E\tilde{G}_{22}^{-1} E^\top)^{-1}
\end{bmatrix}\!\!
\begin{bmatrix}
     I \!\!\!\!\!& 0\\
     E\tilde{G}_{22}^{-1}\tilde{G}_{12}^\top \!\!\!\!\!& I
\end{bmatrix}.
\end{aligned}
\end{equation*}
\vspace{1pt}
\end{mylem}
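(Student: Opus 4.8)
The plan is to show the two set‑membership conditions are equivalent by a change of variables that "completes the square" in $\tilde\Delta$. First I would parametrize an element of $S_{E\tilde\Delta}$ as $\hat\Delta = E\tilde\Delta$ and rewrite the quadratic form defining $S_{E\tilde\Delta}$. Expanding,
\begin{equation*}
\begin{bmatrix} I \\ \tilde\Delta \end{bmatrix}^\top \tilde G \begin{bmatrix} I \\ \tilde\Delta \end{bmatrix}
= \tilde G_{11} + \tilde G_{12}\tilde\Delta + \tilde\Delta^\top \tilde G_{12}^\top + \tilde\Delta^\top \tilde G_{22}\tilde\Delta.
\end{equation*}
Since $\tilde G_{22}\prec 0$, I complete the square: with $V := \tilde\Delta + \tilde G_{22}^{-1}\tilde G_{12}^\top$, the expression equals $(\tilde G_{11} - \tilde G_{12}\tilde G_{22}^{-1}\tilde G_{12}^\top) + V^\top \tilde G_{22} V$. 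So $\tilde\Delta$ lies in the feasible set iff $(\tilde G_{11} - \tilde G_{12}\tilde G_{22}^{-1}\tilde G_{12}^\top) + V^\top \tilde G_{22} V \succeq 0$, i.e., $V^\top(-\tilde G_{22})V \preceq \tilde G_{11} - \tilde G_{12}\tilde G_{22}^{-1}\tilde G_{12}^\top =: \tilde G_{/}$. The analogous computation applied to the $Y$ quadratic form, using that the $(2,2)$‑block of $Y$ works out to $-E\tilde G_{22}^{-1}E^\top \prec 0$ (it is $(E\tilde G_{22}^{-1}E^\top)^{-1}$ conjugated appropriately — I would verify the block arithmetic of the product defining $Y$), reduces membership in $S_{\hat\Delta}$ to the condition $W^\top(E(-\tilde G_{22}^{-1})E^\top)^{-1} W \preceq \tilde G_{/}$ for $W := \hat\Delta + E\tilde G_{22}^{-1}\tilde G_{12}^\top$.

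Next I would connect the two reduced inequalities. Note that $W = E\tilde\Delta + E\tilde G_{22}^{-1}\tilde G_{12}^\top = E V$ exactly when $\hat\Delta = E\tilde\Delta$. So the claim becomes: the set $\{EV : V^\top(-\tilde G_{22})V \preceq \tilde G_{/}\}$ equals $\{W : W^\top (E(-\tilde G_{22}^{-1})E^\top)^{-1} W \preceq \tilde G_{/}\}$. Writing $M := -\tilde G_{22} \succ 0$ and factoring $\tilde G_{/} = \tilde G_{/}^{1/2}\tilde G_{/}^{1/2}$ (it is positive definite by hypothesis), the left set is $\{EV : \|M^{1/2}V y\| \le \|\tilde G_{/}^{1/2} y\|\ \forall y\}$ and similarly for the right. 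The cleanest route is: $V^\top M V \preceq \tilde G_{/}$ iff there exists a contraction‑type relation, or more directly, use the standard characterization that for fixed $\tilde G_{/}\succ 0$, $\{N : N^\top M N \preceq \tilde G_{/}\} = \{M^{-1/2} K \tilde G_{/}^{1/2} : \|K\|\le 1\}$. Then $EV$ ranges over $\{EM^{-1/2}K\tilde G_{/}^{1/2} : \|K\|\le 1\}$, and since $E$ has full row rank, $EM^{-1/2}$ has full row rank, so $\{EM^{-1/2}K : \|K\|\le 1\} = \{L : L(E M^{-1} E^\top)^{-1} L^\top \preceq I\}$ by the same characterization applied to the "fat" matrix $EM^{-1/2}$ (its nonzero singular values / the surjectivity onto the image). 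Composing, $EV$ ranges over $\{W : W (E M^{-1} E^\top)^{-1/2}$ has operator norm $\le 1$ after the $\tilde G_{/}^{1/2}$ scaling$\}$, which is precisely $\{W : W^\top (E M^{-1} E^\top)^{-1} W \preceq \tilde G_{/}\}$. Reversing the variable substitution $\hat\Delta = W - E\tilde G_{22}^{-1}\tilde G_{12}^\top$ and $\tilde\Delta = V - \tilde G_{22}^{-1}\tilde G_{12}^\top$ then identifies $S_{E\tilde\Delta}$ with $S_{\hat\Delta}$.

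I expect the main obstacle to be the surjectivity/full‑row‑rank step: showing that as $V$ ranges over the solid ellipsoid $\{V^\top M V \preceq \tilde G_{/}\}$, the image $EV$ fills out exactly the ellipsoid defined by $(EM^{-1}E^\top)^{-1}$ and not something smaller. The "$\subseteq$" direction is a routine Schur‑complement/Cauchy–Schwarz estimate; the "$\supseteq$" direction is where full row rank of $E$ is essential — given a target $W$ in the image ellipsoid one must exhibit a preimage $V$ still satisfying the matrix inequality, and the natural candidate is the minimum‑norm‑type choice $V = M^{-1}E^\top(EM^{-1}E^\top)^{-1}W$, for which one checks $V^\top M V = W^\top(EM^{-1}E^\top)^{-1}W \preceq \tilde G_{/}$ with equality in the relevant block. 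A secondary, purely mechanical obstacle is verifying that the triple matrix product in the definition of $Y$ indeed has $(1,1)$‑block $\tilde G_{/}$ plus a cross term and $(2,2)$‑block $(E\tilde G_{22}^{-1}E^\top)^{-1}$ arranged so that its own square‑completion yields the reduced inequality above; I would do this by direct block multiplication, keeping $E\tilde G_{22}^{-1}\tilde G_{12}^\top$ as a single symbol throughout.
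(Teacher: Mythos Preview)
Your plan is correct and shares the paper's core structure: both proofs first complete the square to recenter the two QMIs at $\tilde\Delta_0=-\tilde G_{22}^{-1}\tilde G_{12}^\top$ and $E\tilde\Delta_0$, reducing the claim to showing that the image under $E$ of the matrix ellipsoid $\{V:V^\top(-\tilde G_{22})V\preceq \tilde G_{/}\}$ is exactly $\{W:W^\top(E\tilde G_{22}^{-1}E^\top)^{-1}W\preceq \tilde G_{/}\}$. For the nontrivial inclusion $S_{\hat\Delta}\subseteq S_{E\tilde\Delta}$, your minimum-norm preimage $V=M^{-1}E^\top(EM^{-1}E^\top)^{-1}W$ is in fact the very same candidate the paper constructs; the paper just writes it via an explicit kernel complement $E^\perp$ and the block inverse identity $\bigl[\begin{smallmatrix}E\\E^\perp\tilde G_{22}\end{smallmatrix}\bigr]^{-1}$, whereas you reach it directly without introducing $E^\perp$. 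For the inclusion $S_{E\tilde\Delta}\subseteq S_{\hat\Delta}$, the paper splits $\tilde\Delta$ along the $E/E^\perp$ decomposition and discards the negative-definite kernel block, while you use the equivalent (and slightly slicker) projection inequality $E^\top(EM^{-1}E^\top)^{-1}E\preceq M$. The contraction parametrization $N=M^{-1/2}K\tilde G_{/}^{1/2}$, $\|K\|\le1$, that you sketch is a valid alternative packaging of the same two steps. Your ``secondary obstacle'' about verifying the block structure of $Y$ is indeed purely mechanical: the congruence in the definition of $Y$ is already the completed-square form, so its own recentering yields exactly your reduced inequality for $W$.
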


\begin{proof}
Since $\tilde{G}_{22}\prec 0$ and $\tilde{G}_{11}-\tilde{G}_{12} \tilde{G}_{22}^{-1}\tilde{G}_{12}^\top \succ 0$, the QMI in $S_{E\tilde{\Delta}}$ can be equivalently rewritten in a block diagonal structure as
\begin{equation}
\begin{bmatrix} I \\ \tilde{\Delta}-\tilde{\Delta}_0 \end{bmatrix}^\top\!\!
\begin{bmatrix}
    \tilde{G}_{11}\!-\!\tilde{G}_{12} \tilde{G}_{22}^{-1}\tilde{G}_{12}^\top \!\!& 0 \\
               0 \!\!& \tilde{G}_{22}
\end{bmatrix}\!\!
\begin{bmatrix} I \\ \tilde{\Delta} -\tilde{\Delta}_0 \end{bmatrix}\succeq 0, \label{eq:LemmaE:QMITilde}
\end{equation}
where $\tilde{\Delta}_0 = -\tilde{G}_{22}^{-1}\tilde{G}_{12}^\top$. 
Since $E$ having full row rank and $\tilde{G}_{22} \prec 0$, $E\tilde{G}_{22}^{-1} E^\top$ is invertible and the QMI for $S_{\hat{\Delta}}$ is equivalent to
\begin{equation}
\begin{bmatrix} I \\ \hat{\Delta}\!\!-\!\!E\tilde{\Delta}_0 \end{bmatrix}^\top\!\!
\begin{bmatrix}
    \tilde{G}_{11}\!-\!\tilde{G}_{12} \tilde{G}_{22}^{-1}\tilde{G}_{12}^\top \!\!\!\!\!\!\!\!\!\!\!\!\!\!& 0 \\
               0 \!\!\!\!\!\!\!\!\!\!\!\!\!\!& (E\tilde{G}_{22}^{-1} E^\top)^{-1}
\end{bmatrix}\!\!
\begin{bmatrix} I \\ \hat{\Delta} \!\!-\!\!E\tilde{\Delta}_0 \end{bmatrix}\!\succeq\! 0.\label{eq:LemmaE:QMIhat}
\end{equation}
The idea of this proof is to show every component of $\tilde{\Delta}$ lying in the kernel of $E$ vanishes when considering $E\tilde{\Delta}$ and hence can be removed. 
Furthermore, we define $E^\perp$ to describe a matrix with rows forming a basis of the kernel of $E$. 
It follows directly that $EE^{\perp^\top}=0$.
Hence, it is possible to compute the matrix inverse
\begin{equation}\label{lemma1:proof1}
    \begin{bmatrix} E \\ \!E^\perp \tilde{G}_{22} \!\end{bmatrix}^{-1}\!\!\!\!\!\!=\!\!
    \begin{bmatrix} \tilde{G}_{22}^{-1}E^\top\!\! (E\tilde{G}_{22}^{-1} E^\top)^{-1} \!\!& E^{\perp\top}\!\!(E^\perp \tilde{G}_{22} E^{\perp\top})^{-1} \end{bmatrix} \!.
\end{equation}
Since for any invertible square matrix, the left and right inverse are identical and $\tilde{G}_{22}=I\tilde{G}_{22}I$, it holds that
\begin{equation}\label{lemma1:proof2}
    \tilde{G}_{22} \!=\! 
    \begin{bmatrix} \star  \end{bmatrix}^\top\!\!
    \begin{bmatrix}
        (E\tilde{G}_{22}^{-1}E^\top)^{-1} \!\!\!\!\!\!\!\!\!\!\!\!& 0 \\
        0 \!\!\!\!\!\!\!\!\!\!\!\!& (E^\perp \tilde{G}_{22} E^{\perp\top})^{-1}\\
    \end{bmatrix}\!\!
    \begin{bmatrix} E \\E^\perp \tilde{G}_{22}   \end{bmatrix}.
\end{equation}

Using \eqref{lemma1:proof1} and \eqref{lemma1:proof2}, we show equality of both sets by considering each set inclusion separately.

$S_{E\tilde{\Delta}}\subseteq S_{\hat{\Delta}}:$ Let $\tilde{\Delta}$ satisfy \eqref{eq:LemmaE:QMITilde}. 
Since $\tilde{G}_{22}$ and $E\tilde{G}_{22}^{-1} E^\top$ are invertible, we can represent $\tilde{\Delta}$ by
\begin{equation*}
\tilde{\Delta} \!\!=\!\! \begin{bmatrix} \tilde{G}_{22}^{-1}E^\top (E\tilde{G}_{22}^{-1} E^\top)^{-1} \!\!& E^{\perp\top}(E^\perp \tilde{G}_{22} E^{\perp\top})^{-1} \end{bmatrix}\!\!
\begin{bmatrix} \tilde{\Delta}_1 \\ \tilde{\Delta}_2 \end{bmatrix}
\end{equation*}
with $E\tilde{\Delta} = \tilde{\Delta}_1$ and $E^\perp \tilde{G}_{22}\tilde{\Delta}=\tilde{\Delta}_2$.
Inserting this in \eqref{eq:LemmaE:QMITilde} and employing \eqref{lemma1:proof2} for $\tilde{G}_{22}$ results in
\begin{equation*}
\begin{aligned}
\begin{bmatrix}
\star
\end{bmatrix}^\top\!\!
\begin{bmatrix}
\tilde{G}_{11}\!-\!\tilde{G}_{12} \tilde{G}_{22}^{-1}\tilde{G}_{12}^\top \!\!\!\!\!&0 \!\!\!\!\!\!\!\!\!& 0 \\
0            \!\!\!\!\!& (E\tilde{G}_{22}^{-1}E^\top)^{-1} \!\!\!\!\!\!\!\!\!& 0\\
0            \!\!\!\!\!&                                 0 \!\!\!\!\!\!\!\!\!& (E^\perp \tilde{G}_{22} E^{\perp\top})^{-1}\\
\end{bmatrix}\!\!\!\\
\begin{bmatrix}
I \\
\tilde{\Delta}_1 - E\tilde{\Delta}_0 \\
\tilde{\Delta}_2 \!-\! E^\perp \tilde{G}_{22}\tilde{\Delta}_0 \\
\end{bmatrix}\succeq 0.
\end{aligned}
\end{equation*}
Since $(E^\perp \tilde{G}_{22} E^{\perp\top})^{-1}\prec 0$, the above inequality results in \eqref{eq:LemmaE:QMIhat} by constructing $\hat{\Delta}=\tilde{\Delta}_1\in S_{\hat{\Delta}}$.

$S_{E\tilde{\Delta}}\supseteq S_{\hat{\Delta}}:$ Let $\hat{\Delta}$ satisfy \eqref{eq:LemmaE:QMIhat} and define the following candidate solution for $\tilde{\Delta}$
\begin{equation*}
\begin{aligned}
    \tilde{\Delta}\!\!=\!\! \tilde{G}_{22}^{-1}E^\top\!\!(\!E\tilde{G}_{22}^{-1}E^\top\!)^{-1}\!\hat{\Delta} 
    \!+\! E^{\perp\top}\!\!(E^\perp \tilde{G}_{22} E^{\perp\top}\!)^{-1}\!E^\perp \tilde{G}_{22}\tilde{\Delta}_0
\end{aligned}
\end{equation*}
with
$E\tilde{\Delta}=\hat{\Delta}$.
It remains to show that $\tilde{\Delta}$ satisfies \eqref{eq:LemmaE:QMITilde}.
Employing \eqref{lemma1:proof2} for $\tilde{G}_{22}$ and inserting the candidate solution $\tilde{\Delta}$ in \eqref{eq:LemmaE:QMITilde} yields \eqref{eq:LemmaE:QMIhat}, which is satisfied by construction.
\end{proof}

Lemma \ref{lemma1} is a dual version of \cite[Lemma 1]{braendle2024parameterization} and can be used to project a set described by a latent variable to an equal set with a lower dimensional latent variable. 
In the following, we state an assumption for the collected data.
%$z_i=Cx_i+Du_i, \forall i\in\mathbb{I}_{[0, T-1]}$.

\begin{assum}\label{assum2}\upshape
The collected data satisfy $z_i^d\neq 0$ for all $i\in\mathbb{I}_{[0, T-1]}$.
\end{assum}

%The above assumption is typically easy to be satisfied in practice.
In the following lemma, we use Lemma \ref{lemma1} to characterize the system matrices consistent with the data.

\begin{mylem}[Data-driven characterization of $\Sigma$]\upshape\label{lemma2}
Suppose Assumptions \ref{assumption:delta} and \ref{assum2} hold.
Then, the set $\Sigma$ is equal to
\begin{equation}\label{lemma:Sigma}
\left\{\!\!(A, B):\!\!\!\!\!\!
\begin{gathered}
\begin{bmatrix}
I \\A^\top \\B^\top
\end{bmatrix}^\top
M(\alpha)
\begin{bmatrix}
I \\A^\top \\B^\top
\end{bmatrix}\!\succeq\! 0,  \\
\forall \alpha\!=\!(\alpha_0, \ldots, \alpha_{T-1}), \alpha_i\!\geq\! 0, i\in\mathbb{I}_{[0, T-1]}
\end{gathered}
\right\},
\end{equation}
where
\begin{equation}\label{lemma:Pi}
M(\alpha)=\sum_{i=0}^{T-1}\alpha_i
\begin{bmatrix}
I &x_{i+1}^d\\
0 &-x_i^d\\
0 &-u_i^d
\end{bmatrix}
N_i
\begin{bmatrix}
I &x_{i+1}^d\\
0 &-x_i^d\\
0 &-u_i^d
\end{bmatrix}^\top,
\end{equation}
and $N_i$ is defined as 
\begin{equation}\label{Ni}
N_i\!\!=\!\!\begin{bmatrix}\star\end{bmatrix}^\top\!\!\!
\begin{bmatrix}
    G_{11}\!\!-\!\!G_{12}G_{22}^{-1}G_{12}^\top \!\!\!\!\!\!\!\!\!\!\!\!&0\\
    0 \!\!\!\!\!\!\!\!\!\!\!\!&({z_i^d}^\top G_{22}^{-1} z_i^d)^{-1}
\end{bmatrix}\!\!\!
\begin{bmatrix}
    I \!\!\!\!&0\\
    {z_i^d}^\top G_{22}^{-1}G_{12}^\top \!\!\!\!&I
\end{bmatrix}.
\end{equation}
\end{mylem}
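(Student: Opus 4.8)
The plan is to rewrite the consistency condition defining each $\Sigma_i$ as a QMI in $(A,B)$ and then intersect over $i$. Fix $i\in\mathbb{I}_{[0,T-1]}$. The defining relation $x_{i+1}^d = Ax_i^d + Bu_i^d + \Delta_i^d z_i^d$ says that $\Delta_i^d z_i^d$ equals the known vector $x_{i+1}^d - Ax_i^d - Bu_i^d$. So $(A,B)\in\Sigma_i$ iff the vector $r_i(A,B) := x_{i+1}^d - Ax_i^d - Bu_i^d$ lies in the set $\{\Delta z_i^d : \Delta\in\Pi\}$, i.e. in a set of exactly the form $S_{E\tilde\Delta}$ from Lemma~\ref{lemma1} with the latent matrix $\Delta$ multiplied on the right by the (column) vector $z_i^d$. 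To match Lemma~\ref{lemma1}, which multiplies $\tilde\Delta$ on the \emph{left} by a full-row-rank $E$, I transpose: $\Delta z_i^d = ({z_i^d}^\top \Delta^\top)^\top$, and $\Delta\in\Pi$ is, by Assumption~\ref{assumption:delta}, exactly the QMI $[\,I\ ;\ \Delta^\top\,]^\top G\,[\,I\ ;\ \Delta^\top\,]\succeq 0$. Hence with $\tilde\Delta := \Delta^\top$, $\tilde G := G$ (which satisfies the hypotheses of Lemma~\ref{lemma1} by Assumption~\ref{assumption:delta}), and $E := {z_i^d}^\top$, the set $\{{z_i^d}^\top\Delta^\top : \Delta\in\Pi\} = S_{E\tilde\Delta}$. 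The full-row-rank requirement on $E = {z_i^d}^\top$ is precisely $z_i^d\neq 0$, which is Assumption~\ref{assum2}; this is why that assumption is needed.

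Applying Lemma~\ref{lemma1} gives $S_{E\tilde\Delta} = S_{\hat\Delta}$ where $\hat\Delta\in\mathbb{R}^{1\times 1}$ ranges over $[\,I\ ;\ \hat\Delta\,]^\top Y_i\,[\,I\ ;\ \hat\Delta\,]\succeq 0$ with $Y_i$ built from $G$ and $E={z_i^d}^\top$ by the formula in Lemma~\ref{lemma1}; substituting $E={z_i^d}^\top$ into that formula yields exactly $N_i$ as defined in \eqref{Ni} (with $E\tilde G_{22}^{-1}E^\top = {z_i^d}^\top G_{22}^{-1}z_i^d$ a scalar, so its inverse is the scalar reciprocal). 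Therefore $(A,B)\in\Sigma_i$ iff the scalar $\hat\Delta := r_i(A,B)$ — note $\hat\Delta$ has the same dimension as $\Delta z_i^d$, namely $n_x\times 1$; I should be careful here, the "scalar" statement is wrong in general, $\hat\Delta$ is an $n_x$-vector and $Y_i=N_i$ is an $(n_x+1)\times(n_x+1)$ matrix — satisfies $[\,I\ ;\ r_i(A,B)\,]^\top N_i\,[\,I\ ;\ r_i(A,B)\,]\succeq 0$. Now express $[\,I\ ;\ r_i(A,B)\,]$ as a linear image of $[\,I\ ;\ A^\top\ ;\ B^\top\,]$: indeed
\begin{equation*}
\begin{bmatrix} I\\ r_i(A,B) \end{bmatrix}
=
\begin{bmatrix} I & x_{i+1}^d\\ 0 & -x_i^d\\ 0 & -u_i^d \end{bmatrix}^\top
\begin{bmatrix} I\\ A^\top\\ B^\top \end{bmatrix},
\end{equation*}
wait — I need the transpose placement consistent with \eqref{lemma:Pi}; the block matrix in \eqref{lemma:Pi} is $n_x+n_x+n_u$ rows by $n_x+1$ columns, call it $T_i$, and the identity is $[\,I\ ;\ r_i(A,B)^\top\,]$ ... rather, plugging the factorization $[\,I\ ;\ A^\top\ ;\ B^\top\,]^\top T_i$ has the right shape $1+n_x$, and one checks its two block-rows are $I$ and $(x_{i+1}^d - A x_i^d - B u_i^d)^\top = r_i(A,B)^\top$; but the QMI with $N_i$ is insensitive to replacing the vector by its transpose in the appropriate slot since $N_i$ acts on the "$\Delta^\top$" convention inherited from $G$. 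Substituting, $(A,B)\in\Sigma_i$ iff $[\,I\ ;\ A^\top\ ;\ B^\top\,]^\top\, T_i N_i T_i^\top\, [\,I\ ;\ A^\top\ ;\ B^\top\,]\succeq 0$, which is exactly the $i$-th summand's quadratic form in \eqref{lemma:Pi}.

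Finally I intersect over $i$. We have $\Sigma = \bigcap_{i=0}^{T-1}\Sigma_i$, and each $\Sigma_i$ is $\{(A,B): [\star]^\top M_i [\star]\succeq 0\}$ with $M_i := T_i N_i T_i^\top$ and $[\star] = [\,I\ ;\ A^\top\ ;\ B^\top\,]$. For any $(A,B)$: if $(A,B)\in\Sigma$ then each $[\star]^\top M_i[\star]\succeq 0$, so for any nonnegative weights $\alpha_i$ the sum $[\star]^\top M(\alpha)[\star] = \sum_i \alpha_i [\star]^\top M_i[\star]\succeq 0$, giving the inclusion into the set \eqref{lemma:Sigma}. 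Conversely, if $[\star]^\top M(\alpha)[\star]\succeq 0$ for \emph{all} $\alpha\geq 0$, then choosing $\alpha = e_i$ (the $i$-th unit vector) recovers $[\star]^\top M_i[\star]\succeq 0$ for each $i$, hence $(A,B)\in\Sigma_i$ for all $i$, i.e. $(A,B)\in\Sigma$. This two-way "for all $\alpha\geq 0$" argument closes the equality. The main obstacle — and the step deserving the most care in the write-up — is the bookkeeping in the first two paragraphs: correctly transposing the $\Delta$-convention so Lemma~\ref{lemma1} applies with $E = {z_i^d}^\top$, verifying that its $Y$ specializes to $N_i$, and checking that the linear substitution $[\,I\ ;\ r_i(A,B)\,] = T_i^\top[\,I\ ;\ A^\top\ ;\ B^\top\,]$ has the block structure written in \eqref{lemma:Pi}; the union/intersection step at the end is routine.
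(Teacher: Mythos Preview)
Your approach is essentially identical to the paper's: apply Lemma~\ref{lemma1} with $\tilde\Delta=\Delta^\top$, $E={z_i^d}^\top$, $\tilde G=G$ (using Assumption~\ref{assum2} for the full-row-rank hypothesis) to project the set $\{\Delta z_i^d:\Delta\in\Pi\}$ onto the QMI governed by $N_i$, substitute $\Delta_i^d z_i^d = x_{i+1}^d-Ax_i^d-Bu_i^d$, and rewrite via the block matrix $T_i$ to obtain the per-sample QMI \eqref{lemma:sigma1}; then intersect over $i$. The paper handles the last step by citation, whereas you spell out the two-sided argument (nonnegative combinations for one inclusion, $\alpha=e_i$ for the converse), which is fine and arguably clearer. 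For the write-up, drop the stream-of-consciousness self-corrections about the dimensions of $\hat\Delta$ and the transpose placement: simply record that $[I\ A\ B]\,T_i=[I\ \ r_i(A,B)]$, so $T_i^\top[\,I;\,A^\top;\,B^\top\,]=[\,I;\,r_i(A,B)^\top\,]$, and the QMI with $N_i$ becomes exactly the $i$-th summand of $M(\alpha)$.
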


\begin{proof}
Given data $z_i^d, \forall i\in\mathbb{I}_{[0, T-1]}$, we define 
\begin{equation}\label{lemma2:proof1}
\left\{
{z_i^d}^\top {\Delta_i^d}^\top:\begin{bmatrix}
    I\\
    {\Delta_i^d}^\top
\end{bmatrix}^\top
G
\begin{bmatrix}
    I\\
    {\Delta_i^d}^\top
\end{bmatrix}\succeq 0
\right\},
\end{equation}
which includes all possible realizations of ${z_i^d}^\top {\Delta_i^d}^\top$ given that Assumption \ref{assumption:delta} holds.
Since $z_i^d\neq 0$ for all $i\in\mathbb{I}_{[0, T-1]}$, by applying Lemma \ref{lemma1} for $\tilde{\Delta}={\Delta_i^d}^\top, E={z_i^d}^\top, \tilde{G}=G$, the set in \eqref{lemma2:proof1} is equal to
\begin{equation}\label{lemma2:proof2}
\left\{
{z_i^d}^\top {\Delta_i^d}^\top:\begin{bmatrix}
    I\\
    (\Delta_i z_i^d)^\top
\end{bmatrix}^\top
N_i
\begin{bmatrix}
    I\\
    (\Delta_i z_i^d)^\top
\end{bmatrix}\succeq 0
\right\},
\end{equation}
where $N_i$ is defined as in \eqref{Ni}.
The set in \eqref{lemma2:proof2} includes all possible realization of ${z_i^d}^\top {\Delta_i^d}^\top$ with $\Delta_i^d\in\Pi$.
Based on the system dynamics \eqref{system}, $(A, B)\in\Sigma_i$ if and only if there exists $\Delta_i^d\in\Pi$ such that the following equation holds
\begin{equation}\label{lemma:sigma:proof3}
\Delta_i^d z_i^d=x_{i+1}^d-Ax_i^d-Bu_i^d.
\end{equation}
Thus, replacing $\Delta_i^d z_i^d$ by $x_{i+1}^d-Ax_i^d-Bu_i^d$ in \eqref{lemma2:proof2}, the following set includes all possible $(A, B)$ that are consistent with the data $x_i^d, u_i^d, z_i^d, x_{i+1}^d$ for all $i\in\mathbb{I}_{[0, T-1]}$
\begin{equation}\label{lemma:sigma1}
\left\{\!\!(A, B)\!\!:\!\!\!
\begin{bmatrix}
I\\ A^\top \\B^\top
\end{bmatrix}^\top\!\!\!
\begin{bmatrix}
I \!\!&x_{i+1}^d\\
0 \!\!&-x_i^d\\
0 \!\!&-u_i^d
\end{bmatrix}
\!\!N_i\!\!
\begin{bmatrix}
I \!\!&x_{i+1}^d\\
0 \!\!&-x_i^d\\
0 \!\!&-u_i^d
\end{bmatrix}^\top\!\!\!
\begin{bmatrix}
I\\ A^\top \\B^\top
\end{bmatrix}
\!\!\succeq \!0\!
\right\}.
\end{equation}
Analogous to \cite{berberich2023combining, bisoffi2021trade}, and using the characterization of $\Sigma_i$ in \eqref{lemma:sigma1}, we obtain the set $\Sigma$ as in \eqref{lemma:Sigma}.
\end{proof}

%Lemma \ref{lemma1} derives a data-driven characterization of the unknown system matrices that explains the collected input-state data in \eqref{data}. 
%This characterization is based on Assumption \ref{assumption:delta} and the system dynamics.
%In the following section, we will use the data-driven characterization of the consistent system matrices in \eqref{lemma:Sigma} to formulate the data-driven min-max MPC problem and derive an SDP problem to find a state-feedback gain that stabilizes the system \eqref{system}. 

\begin{remark}\upshape
In the literature \cite{verhoek2024decoupling, berberich2023combining,miller2022data}, data-driven parametrizations of the consistent system matrices of an LPV system were derived under the assumption that the scheduling signal is known.
Lemma 2 contains a parameterization for unknown scheduling signals satisfying a given QMI.
\end{remark}

\section{Data-Driven Min-Max MPC for LPV systems}\label{sec:4}
In Section \ref{sec:4.1}, we propose a data-driven min-max MPC problem for LPV systems, which we reformulate as a semi-definite programming (SDP) problem.
%which returns a state-feedback gain that stabilizes the LPV system for all scheduling signals satisfying the assumption and all consistent system matrices.
%The state-feedback gain is recomputed at each time step using a receding-horizon algorithm.
In Section \ref{sec:4:2}, we show that the SDP problem is recursively feasible and the resulting closed-loop system is exponentially stable and satisfies the constraints for all consistent system matrices and all scheduling signals satisfying Assumption \ref{assumption:delta}.

\subsection{Data-driven min-max MPC problem}\label{sec:4.1}
Given the current state $x_t\in\mathbb{R}^{n_x}$, the data-driven min-max MPC problem is formulated as follows:
\begin{subequations}\label{mpc}
\begin{align}
J^\star&(x_t)=\min_{\bar{u}(t)}\!\!\!\max_{(A, B)\in \Sigma\atop\Delta_{t+k}\in\Pi,k\in\mathbb{N}}\sum_{k=0}^\infty \|\bar{u}_k(t)\|_R^2+\|\bar{x}_k(t)\|_Q^2\label{mpc:cost}\\
\text{s.t. }&\bar{x}_{k+1}(t)=A\bar{x}_k(t)+B\bar{u}_k(t)+\Delta_{t+k}\bar{z}_k(t),\label{mpc:con1}\\
&\bar{x}_0(t)=x_t,\label{mpc:con2}\\
&\|\bar{u}_k(t)\|_{S_u}\leq 1, \forall k\in\mathbb{N},\label{mpc:con3}\\
&\|\bar{x}_k(t)\|_{S_x}\leq 1, \forall (A, B)\in\Sigma, \Delta_{t+k}\in \Pi, k\in\mathbb{N}.\label{mpc:con4}
\end{align}
\end{subequations}
In problem \eqref{mpc}, $\bar{x}_k(t)$ and $\bar{u}_k(t)$ denote the predicted state and input at time $t+k$ given the current state $x_t$.
The optimization variable $\bar{u}(t)$ is a sequence of predicted inputs.
We initialize $\bar{x}_0(t)$ as the current state $x_t$ in constraint \eqref{mpc:con2}.
The objective of problem \eqref{mpc} is to minimize the worst-case infinite-horizon cost over all consistent system matrices in $\Sigma$ and all scheduling signals $\Delta_{t+k}\in\Pi, k\in\mathbb{N}$.
The problem \eqref{mpc} is intractable because of the min-max formulation and infinite-horizon cost.
Similar to \cite{xie2024minmax,kothare1996robust}, we restrict our attention to a state-feedback control law $\bar{u}_k(t)=F_t\bar{x}_k(t), \forall k\in\mathbb{I}_{[0, \infty)}$.
This allows us to transform problem \eqref{mpc} into a computationally tractable SDP problem. 

\begin{figure*}[htbp]
\begin{subequations}\label{SDP}
\begin{align}
    &\minimize\limits_{\gamma>0, H\in\mathbb{R}^{n_x\times n_x}, L\in\mathbb{R}^{n_u\times n_x}, \alpha\in\mathbb{R}^{T}, \lambda>0}\gamma\label{SDP:obj}\\
    \text{s.t. }
    &\begin{bmatrix}1 &x_t^\top\\
x_t &H\end{bmatrix}\succeq 0, \begin{bmatrix}
H &L^\top\\ L &S_u^{-1}
\end{bmatrix}\succeq 0, \begin{bmatrix}
H &H^\top\\ H &S_x^{-1}
\end{bmatrix}\succeq 0,\alpha=(\alpha_0, \ldots, \alpha_{T-1}), \alpha_i\geq 0, \forall i\in\mathbb{I}_{[0, T-1]},\label{SDP:con1}\\
    &\begin{bmatrix}
\begin{bmatrix}
H-\lambda G_{11} &0 &0 &-\lambda G_{12}\\
0 &0 &0 &0\\
0 &0 &0 &0\\
-\lambda G_{12}^\top &0 &0 &-\lambda G_{22}
\end{bmatrix}-
\begin{bmatrix}
M(\alpha) &0\\
0 &0
\end{bmatrix}
&\begin{bmatrix}
\begin{bmatrix}
0\\H\\L
\end{bmatrix} &0 &\begin{bmatrix}
0\\H\\L
\end{bmatrix} &0\\
CH+DL &0 &CH+DL &0
\end{bmatrix}\\
\begin{bmatrix}
\begin{bmatrix}
    0 &H^\top &L^\top
\end{bmatrix} &(CH+DL)^\top\\
0 &0\\
\begin{bmatrix}
    0 &H^\top &L^\top
\end{bmatrix} &(CH+DL)^\top\\
0&0
\end{bmatrix}
&\begin{bmatrix}
H &\Phi^\top &0 &0\\
\Phi &\gamma I &0 &0\\
0 &0 &H &\Phi^\top\\
0 &0 &\Phi &\gamma I
\end{bmatrix}
\end{bmatrix}\succ 0,\label{SDP:con2}
\end{align}
\end{subequations}
\vspace{-30pt}
\end{figure*}

Given the state $x_t\in\mathbb{R}^{n_x}$ and the input-state data \eqref{data}, the 
SDP problem is formulated as in \eqref{SDP}, where $\Phi=\begin{bmatrix}M_R L\\ M_Q H\end{bmatrix}$, $M_R^\top M_R=R$ and $M_Q^\top M_Q=Q$.
The optimal solution of problem \eqref{SDP} depends on the current measured state $x_t$.
Therefore, we denote it by $\gamma_{x_t}^\star, H_{x_t}^\star, L_{x_t}^\star, \alpha_{x_t}^\star, \lambda_{x_t}^\star$.
The corresponding optimal state-feedback gain is given by $F_{x_t}^\star=L_{x_t}^\star (H_{x_t}^\star)^{-1}$.

In the following theorem, we show that the SDP problem \eqref{SDP} minimizes an upper bound on the optimal cost of the problem \eqref{mpc}. 
%This upper bound is a quadratic form of the given state $x_t$. 
%The state-feedback gain ensures that the resulting closed-loop cost is bounded by $\gamma$.

\begin{mythm}\upshape\label{Theorem1}
Suppose that Assumption \ref{assumption:delta} and \ref{assum2} hold. 
If there exist $\gamma, \lambda>0$, $H\in\mathbb{R}^{n_x\times n_x}$, $ L\in\mathbb{R}^{n_u\times n_x}$, $\alpha\in\mathbb{R}^T$ such that the inequalities in \eqref{SDP:con1}-\eqref{SDP:con2} hold, then the optimal cost of \eqref{mpc} is guaranteed to be upper bounded by $\gamma$.
\end{mythm}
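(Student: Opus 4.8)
The plan is to establish the standard min-max MPC chain of reasoning adapted to the data-driven LPV setting: show that the feasibility of \eqref{SDP:con1}--\eqref{SDP:con2} implies the existence of a quadratic function $V(x)=x^\top P x$ with $P=\gamma H^{-1}$ that (i) upper bounds the infinite-horizon worst-case cost from $x_t$ by $\gamma$, (ii) decreases along the closed loop $\bar{x}_{k+1}=(A+BF)\bar{x}_k+\Delta_{t+k}\bar{z}_k$ for all $(A,B)\in\Sigma$ and all $\Delta_{t+k}\in\Pi$, and (iii) keeps the predicted trajectory inside the ellipsoidal constraint sets. First I would interpret the three LMIs in \eqref{SDP:con1} via Schur complements: $\begin{bmatrix}1 & x_t^\top\\ x_t & H\end{bmatrix}\succeq 0$ gives $x_t^\top H^{-1}x_t\le 1$, i.e. $V(x_t)\le\gamma$; $\begin{bmatrix}H & L^\top\\ L & S_u^{-1}\end{bmatrix}\succeq 0$ gives $L^\top S_u L\preceq H$, which combined with $u=Fx=LH^{-1}x$ and $x^\top H^{-1}x\le 1$ on the sublevel set yields $\|u\|_{S_u}\le 1$; and similarly the third LMI encodes $\|x\|_{S_x}\le 1$ on $\{x:x^\top H^{-1}x\le 1\}$. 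So the constraint part reduces to showing the sublevel set $\{x:x^\top H^{-1}x\le 1\}$ is invariant under the closed loop for every consistent model and scheduling signal — which is exactly what the big LMI \eqref{SDP:con2} will give.

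The core step is \eqref{SDP:con2}. I would read it as a dissipation-type inequality: after a congruence transformation by $\mathrm{diag}(H^{-1},\ldots)$ (or, working in the $H$-coordinates directly), the $(1,1)$ block contains $H-\lambda G_{11}-(\text{part of }M(\alpha))$ and the structure with $\Phi=\begin{bmatrix}M_R L\\ M_Q H\end{bmatrix}$ produces, via Schur complement on the $\begin{bmatrix}H & \Phi^\top\\ \Phi & \gamma I\end{bmatrix}$ blocks, the term $\tfrac1\gamma H(F^\top R F+Q)H$ accounting for the stage cost $l(u,x)$. The off-diagonal blocks with $CH+DL=(C+DF)H$ and the $-\lambda G$ terms are precisely an S-procedure multiplier absorbing the QMI constraint $\Delta\in\Pi$ on $\Delta\bar z=\Delta(C+DF)\bar x$, while the $-M(\alpha)$ term absorbs the data-consistency QMI from Lemma \ref{lemma2} that characterizes $(A,B)\in\Sigma$. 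The upshot of applying the S-procedure twice (once with multiplier $\lambda\ge0$ for $\Pi$, once with the $\alpha_i\ge0$ multipliers for $\Sigma$) is the pointwise inequality
\begin{equation}\nonumber
V\big((A+BF)x+\Delta(C+DF)x\big)\le V(x)-l(Fx,x)
\end{equation}
for all $x$, all $(A,B)\in\Sigma$, all $\Delta\in\Pi$. Summing this telescoping inequality over the infinite horizon along any worst-case closed-loop trajectory started at $x_t$, using $V\ge0$ and $V(x_t)\le\gamma$, gives $\sum_{k=0}^\infty l(\bar u_k(t),\bar x_k(t))\le V(x_t)\le\gamma$; since $\bar u_k(t)=F\bar x_k(t)$ is a particular feasible choice of the decision sequence $\bar u(t)$ in \eqref{mpc}, the optimal (min) cost is no larger, so $J^\star(x_t)\le\gamma$. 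The decrease inequality also forces $V(\bar x_{k}(t))$ to be nonincreasing, hence $\bar x_k(t)$ stays in $\{x:x^\top H^{-1}x\le 1\}$, closing the constraint-satisfaction argument from the previous paragraph; note the scheduling signal $\Delta_t$ need not be known for any of this since we quantify over all $\Delta_{t+k}\in\Pi$.

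The main obstacle, and the part I would spend the most care on, is bookkeeping the double S-procedure inside the single large LMI \eqref{SDP:con2}: verifying that the particular block arrangement — the placement of $M(\alpha)$, the duplicated $\begin{bmatrix}0\\H\\L\end{bmatrix}$ and $CH+DL$ columns, and the two copies of $\begin{bmatrix}H&\Phi^\top\\\Phi&\gamma I\end{bmatrix}$ — does, after the relevant Schur complements, exactly reproduce the quadratic form $[\star]^\top P[\star]$ in the variables $(I, A^\top, B^\top, \Delta^\top)$ acting on the lifted vector built from $x$, $Fx$, $(A+BF)x+\Delta(C+DF)x$. One has to be careful that the S-procedure is applied in the right order (the $\Pi$-QMI is in the variable $\Delta^\top z = \Delta(C+DF)x$, consistent with the form used in Lemma \ref{lemma2}), that the two copies of the cost block correspond to evaluating the stage cost and the value-function-decrease simultaneously, and that lossless/lossy-ness of the S-procedure is not needed — only the sufficient (one-way) direction. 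I would present this by first writing the desired quadratic inequality, then exhibiting it as a Schur complement of \eqref{SDP:con2}, rather than expanding \eqref{SDP:con2} forward.
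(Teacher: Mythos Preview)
Your proposal is correct and follows essentially the same route as the paper: interpret the three small LMIs in \eqref{SDP:con1} via Schur complements to get $V(x_t)\le\gamma$ and the ellipsoidal constraint inclusions, then unwind \eqref{SDP:con2} by Schur-complementing out the $\begin{bmatrix}H&\Phi^\top\\\Phi&\gamma I\end{bmatrix}$ blocks, apply the one-sided S-procedure with the multipliers $\alpha$ (for the data-consistency QMI of Lemma~\ref{lemma2}) and $\lambda$ (for the scheduling QMI in $\Pi$), and arrive at the robust Lyapunov decrease $V(\bar x_{k+1})-V(\bar x_k)\le -l(\bar u_k,\bar x_k)$, which telescopes to $J^\star(x_t)\le\gamma$. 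The only cosmetic difference is that the paper expands \eqref{SDP:con2} forward through successive Schur complements to reach the decrease inequality, whereas you propose to write the target inequality first and then recognize it as a Schur complement of \eqref{SDP:con2}; both directions are equivalent, and your caution about the block bookkeeping (placement of $M(\alpha)$, the duplicated columns, and the two cost blocks) is well placed since that is indeed where all the algebra lives.
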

\begin{proof}
We define a quadratic Lyapunov function $V(x)=\|x\|_P^2$ for $x\in\mathbb{R}^{n_x}$, where $P\succ 0$.
Suppose that $V$ satisfies the following inequality 
\begin{equation}\label{Theorem1:proofV}
V(\bar{x}_{k+1}(t))-V(\bar{x}_k(t))\leq -\|\bar{u}_k(t)\|_R^2-\|\bar{x}_k(t))\|_Q^2
\end{equation}
for all $\bar{x}_k(t)\in\mathbb{R}^{n_x}, \bar{u}_k(t)=F\bar{x}_k(t)$ and $\bar{x}_{k+1}(t)$ predicted by \eqref{mpc:con1} with any $(A, B)\in \Sigma$ and $\Delta_{t+k}\in\Pi, k\in\mathbb{N}$.
Using the same arguments as in \cite[equations (5)-(7)]{xie2024minmax}, $V(x_t)$ is an upper bound on the cost \eqref{mpc:cost} if \eqref{Theorem1:proofV} holds.
Using the Schur complement, $\|x_t\|_P^2\leq \gamma$ is equivalent to the first LMI in \eqref{SDP:con1}.
The input and state constraint \eqref{mpc:con3}-\eqref{mpc:con4} can be reformulated as the second and third LMIs in \eqref{SDP:con1}, which can be shown analogously to \cite{xie2024minmax}.
Therefore, if \eqref{SDP:con1} and \eqref{Theorem1:proofV} hold, the optimal cost of \eqref{mpc:cost}-\eqref{mpc:con2} is upper bounded by $\gamma$, i.e., $J^\star(x_t)\leq V(x_t)\leq\gamma$. 

In the following, we prove that  \eqref{Theorem1:proofV} holds.
We first apply the Schur complement to the two blocks $\begin{bmatrix}H &\Phi^\top\\\Phi &\gamma I\end{bmatrix}$ in \eqref{SDP:con2}, leading to a different matrix with $\Omega=\begin{bmatrix}H-\frac{1}{\gamma}\Phi^\top\Phi &0\\ 0 &H-\frac{1}{\gamma}\Phi^\top\Phi\end{bmatrix}$ in the right lower block.
Then applying the Schur complement to the resulting matrix by inverting $\Omega$, we obtain
\begin{subequations}
\begin{align}
&S-\begin{bmatrix}
M(\alpha) &\!\!\!\!0\\
0 &\!\!\!\!0
\end{bmatrix}\!-\!\lambda\!\begin{bmatrix}
G_{11}&\!0 &\!0 &\!G_{12}\\
0 &\!0 &\!0 &\!0\\
0 &\!0 &\!0 &\!0\\
G_{12}^\top &\!0 &\!0 &\!G_{22}
\end{bmatrix}\!\succ\! 0,\label{Theorem1:proof11}\\
&H-\frac{1}{\gamma}\Phi^\top \Phi\succ 0,\label{Theorem1:proof12}
\end{align}
\end{subequations}
where $S$ is defined as in \eqref{Theorem1:proof2}.
\begin{figure*}[htbp]
\begin{equation}\label{Theorem1:proof2}
\begin{bmatrix}
I\\
A^\top\\
B^\top\\
\!\Delta_{t+k}^\top\!
\end{bmatrix}^\top\!
\begin{matrix}\underbrace{\begin{bmatrix}
\begin{array}{c|c|c}
H &0 &0\\ \hline 
0 &-\begin{bmatrix}H\\L\end{bmatrix}\!(H\!-\!\frac{1}{\gamma}\Phi^\top\Phi)^{-1}\!\begin{bmatrix}H\\L\end{bmatrix}^\top &-\begin{bmatrix}H\\  L\end{bmatrix}\!(H\!-\!\frac{1}{\gamma}\Phi^\top\Phi)^{-1}\!(CH+DL)^\top\\ \hline 
0 &-(CH+DL)(H\!-\!\frac{1}{\gamma}\Phi^\top\Phi)^{-1}\begin{bmatrix}H\\ L\end{bmatrix}^\top &-(CH+DL)(H\!-\!\frac{1}{\gamma}\Phi^\top\Phi)^{-1}(CH+DL)^\top
\end{array}
\end{bmatrix}}\\S\end{matrix}\!
\begin{bmatrix}
I\\
A^\top\\
B^\top\\
\!\Delta_{t+k}^\top\!
\end{bmatrix}\!\succ\! 0.
\end{equation}
\vspace{-30pt}
\end{figure*}
Combining \eqref{assum:delta} and \eqref{lemma:Sigma}, for any $\Delta_{t+k}\in\Pi, k\in\mathbb{N}$ and $(A, B)\in\Sigma$ the following inequality is satisfied with any $\lambda\geq 0$ and any $\alpha=(\alpha_0, \ldots, \alpha_{T-1}), \alpha_i\geq 0, i\in\mathbb{I}_{[0, T-1]}$
\begin{equation}\label{Theorem1:proof3}
\begin{bmatrix}
I\\
A^\top\\
B^\top\\
\!\Delta_{t+k}^\top\!
\end{bmatrix}^\top\!\!\!\!\!
\left(\!\!\!\begin{bmatrix}
M(\alpha) &\!\!\!\!0\\
0 &\!\!\!\!0
\end{bmatrix}\!\!+\!\!\lambda\!\!\begin{bmatrix}
G_{11}&\!\!\!0 &\!\!0 \!\!&G_{12}\\
0 &\!\!\!0 &\!\!0 \!\!&0\\
0 &\!\!\!0 &\!\!0 \!\!&0\\
G_{12}^\top &\!\!0 &\!0 \!\!&G_{22}
\end{bmatrix}\!\right)\!\!\!
\begin{bmatrix}
I\\
A^\top\\
B^\top\\
\!\Delta_{t+k}^\top\!
\end{bmatrix}\!\!\!\succeq\! 0.
\end{equation}

Pre-multiplying \eqref{Theorem1:proof11} with $\begin{bmatrix}I \!\!&A \!\!&B \!\!&\Delta_{t+k}\end{bmatrix}$ and post-multiplying with $\begin{bmatrix}I \!\!&A \!\!&B \!\!&\Delta_{t+k}\end{bmatrix}^\top$, the resulting inequality and \eqref{Theorem1:proof3} imply that \eqref{Theorem1:proof2} holds for any $\Delta_{t+k}\in\Pi, k\in\mathbb{N}$ and any $(A, B)\in\Sigma$.
The inequality \eqref{Theorem1:proof2} is equivalent to
\begin{equation}\label{Theorem1:proof4}
H\!-\![\star](H\!-\!\frac{1}{\gamma}\Phi^\top\Phi)^{-1}
[(AH\!+\!BL)\!+\!\Delta_{t+k}(CH\!+\!DL)]^\top\!\succ\! 0.
\end{equation}

Applying the Schur complement twice, \eqref{Theorem1:proof4} is equivalent to the following inequality and \eqref{Theorem1:proof12}
\begin{equation}\label{Theorem1:proof5}
\begin{aligned}
[\star]^\top\!\! H^{-1}\![(AH\!+\!BL)\!+\!\Delta_{t+k}(CH\!+\!DL)]\!\!-\!\!H\!\!+\!\!\frac{1}{\gamma}\Phi^\top\!\Phi\!\prec\! 0.
\end{aligned}
\end{equation}
We define $P=\gamma H^{-1}$ and $F=L H^{-1}$. Multiplying both sides of \eqref{Theorem1:proof5} with $P$ and dividing the resulting inequality by $\gamma$, we obtain that
\begin{equation}\label{Theorem1:proofbasicin}
\begin{aligned}
[A\!+\!BF\!+\!\Delta_{t+k}(C\!+\!DF)]^\top P [A\!+\!BF\!+\!\Delta_{t+k}(C\!+\!DF)]
\\-P+Q+F^\top R F\prec 0,
\end{aligned}
\end{equation}
holds for any $(A, B)\in\Sigma$ and any $\Delta_{t+k}\in\Pi, k\in\mathbb{N}$.
Multiplying \eqref{Theorem1:proofbasicin} from left and right by $\bar{x}_k(t)$ and $\bar{x}_k(t)^\top$ and substituting $\bar{u}_k(t)=F\bar{x}_k(t)$ into the resulting inequality, we obtain that  \eqref{Theorem1:proofV} holds for any $(A, B)\in \Sigma$ and any $\Delta_{t+k}\in\Pi, k\in\mathbb{N}$.
Together with the previous arguments, we conclude that the optimal cost of \eqref{mpc:cost}-\eqref{mpc:con2} is guaranteed to be upper bounded by $\gamma$ if \eqref{SDP:con1}-\eqref{SDP:con2} hold.
\end{proof}

%Problem \eqref{SDP} minimizes an upper bound on the optimal cost of the problem \eqref{mpc} and derives the corresponding state-feedback gain.
In order to control the unknown LPV system \eqref{system}, the SDP problem \eqref{SDP} is solved in a receding-horizon manner, where the optimal state-feedback gain is recomputed at each time step, as detailed in Algorithm~1. 

\begin{algorithm}[htb]
\caption{\!Data-driven min-max MPC scheme.\!\!\!}
    \SetKwData{Left}{left}
    \SetKwData{Up}{up}
    \SetKwFunction{FindCompress}{FindCompress}
    \nl At time $t=0$, measure state $x_0$\;
    \nl Solve the problem \eqref{SDP}\;
    \nl Apply the input $u_t=F_{x_t}^\star x_t$\;
    \nl Set $t=t+1$,  measure state $x_t$ and go back to 3\;
    \label{algorithm1}
\end{algorithm}
\vspace{-10pt}

\subsection{Closed-loop guarantees}\label{sec:4:2}
In the following theorem, we prove that the SDP problem \eqref{SDP} is recursively feasible, the closed-loop system resulting from the Algorithm~1 is exponentially stabilized to the origin and satisfies the constraints.
\begin{mythm}\label{Theorem2}\upshape
Suppose Assumption \ref{assumption:delta} and \ref{assum2} holds. If the optimization problem \eqref{SDP} is feasible at time $t=0$, then
\begin{enumerate}[(1)]
\item it is feasible at any time $t\in\mathbb{N}$;
\item the origin is exponentially stabilized for the closed-loop system;
\item the closed-loop trajectory satisfies the input and state constraints.
\end{enumerate}
\end{mythm}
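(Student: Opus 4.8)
The plan is to follow the standard Lyapunov-based argument for min-max MPC à la Kothare et al., adapted to the data-driven LPV setting established in Theorem~\ref{Theorem1}. The three claims are proved in order, with recursive feasibility doing the heavy lifting.

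For \textbf{recursive feasibility}, the key observation is that the SDP~\eqref{SDP} feasibility at state $x_t$ certifies, via Theorem~\ref{Theorem1}, a feedback gain $F_{x_t}^\star = L_{x_t}^\star(H_{x_t}^\star)^{-1}$ and a matrix $P_{x_t}^\star = \gamma_{x_t}^\star (H_{x_t}^\star)^{-1}$ satisfying the Lyapunov decrease inequality~\eqref{Theorem1:proofbasicin} for \emph{all} $(A,B)\in\Sigma$ and \emph{all} $\Delta\in\Pi$. In particular, along the true closed-loop trajectory $x_{t+1} = (A_s + B_s F_{x_t}^\star + \Delta_t(C + D F_{x_t}^\star))x_t$, since $(A_s,B_s)\in\Sigma$ (the true system is consistent with the data, as it generated it) and $\Delta_t\in\Pi$ by Assumption~\ref{assumption:delta}, we get $V_t(x_{t+1}) \le V_t(x_t) - \|x_t\|_Q^2 - \|F_{x_t}^\star x_t\|_R^2 < V_t(x_t)$, where $V_t(x) = \|x\|_{P_{x_t}^\star}^2$. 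The candidate solution at time $t+1$ is simply the \emph{same} $H, L, \alpha, \lambda$ together with $\gamma = \gamma_{x_t}^\star$ (or, more carefully, $\gamma = V_t(x_{t+1})$): I must check the three LMIs of~\eqref{SDP:con1}. The second and third LMIs ($H$--$L$--$S_u^{-1}$ and $H$--$S_x^{-1}$) do not depend on the state, so they still hold. The first LMI, $\begin{bmatrix}1 & x_{t+1}^\top\\ x_{t+1} & H\end{bmatrix}\succeq 0$, is equivalent by Schur complement to $\|x_{t+1}\|_{H^{-1}}^2 \le 1$, i.e. $V_t(x_{t+1})\le \gamma_{x_t}^\star$, which follows from the decrease inequality and $V_t(x_t)\le\gamma_{x_t}^\star$. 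Finally~\eqref{SDP:con2} does not involve $x_t$ at all, so it is unchanged. Hence the SDP is feasible at $t+1$, and by induction at all $t\in\mathbb{N}$.

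For \textbf{exponential stability}, let $V^\star(x_t) := \gamma_{x_t}^\star$ denote the optimal value function. From the decrease inequality and optimality, $V^\star(x_{t+1}) \le V_t(x_{t+1}) \le V^\star(x_t) - \|x_t\|_Q^2 \le V^\star(x_t) - \lambda_{\min}(Q)\|x_t\|^2$. Combined with the sandwich bounds $\lambda_{\min}(Q)\|x_t\|^2 \le \cdots$ — more precisely, $V^\star(x_t)\ge \|x_t\|_{P_{x_t}^\star}^2$ and a uniform lower bound $P_{x_t}^\star \succeq Q \succ 0$ obtained from~\eqref{Theorem1:proofbasicin} evaluated appropriately, together with $V^\star(x_t)\le \gamma_{x_0}^\star/\|x_0\|^2 \cdot \|x_t\|^2$ via monotonicity of $\gamma_{x_t}^\star$ and homogeneity — one gets $c_1\|x_t\|^2 \le V^\star(x_t)\le c_2\|x_t\|^2$ with $c_1,c_2>0$ independent of $t$, and $V^\star(x_{t+1})\le \rho V^\star(x_t)$ for some $\rho\in(0,1)$. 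Standard Lyapunov arguments then yield exponential convergence of $x_t$ to the origin. Here the main technical care is establishing the \emph{uniform} upper bound $c_2$: one uses that $\gamma_{x_t}^\star$ is nonincreasing along the closed loop and that $V^\star$ is quadratically bounded on the relevant sublevel set, e.g. $V^\star(x)\le \|x\|^2 \lambda_{\max}\!\big((H_{x_0}^\star)^{-1}\big)\gamma_{x_0}^\star / (\text{something})$ — I would invoke the argument verbatim from~\cite[Theorem~2]{xie2024minmax}, since the closed-loop structure is identical once Theorem~\ref{Theorem1} is in hand.

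For \textbf{constraint satisfaction}, note that feasibility of~\eqref{SDP} at time $t$ guarantees, through the second and third LMIs of~\eqref{SDP:con1} and the first LMI, that $\|u_t\|_{S_u} = \|F_{x_t}^\star x_t\|_{S_u}\le 1$ and $\|x_t\|_{S_x}\le 1$ for the \emph{applied} input and \emph{current} state — this is precisely the content of the ellipsoidal constraint reformulation referenced in the proof of Theorem~\ref{Theorem1} (shown analogously to~\cite{xie2024minmax}), evaluated at $k=0$. Since recursive feasibility guarantees~\eqref{SDP} is feasible at every $t\in\mathbb{N}$, the constraints hold for all $t$. The \emph{main obstacle} I anticipate is not conceptual but bookkeeping: making the candidate-solution argument for recursive feasibility fully rigorous requires carefully tracking which LMIs depend on $x_t$ and verifying that the decrease inequality~\eqref{Theorem1:proofbasicin} — which is stated for arbitrary $(A,B)\in\Sigma,\Delta\in\Pi$ — genuinely applies to the true closed-loop transition, which in turn hinges on the (implicit but essential) fact that the true data-generating system lies in $\Sigma$; this should be stated explicitly as a consequence of Lemma~\ref{lemma2} and Assumption~\ref{assumption:delta}.
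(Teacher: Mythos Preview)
Your overall strategy---candidate-solution argument for recursive feasibility, Lyapunov decrease from \eqref{Theorem1:proofbasicin}, ellipsoid containment for constraints---is exactly the route the paper takes, which in turn follows \cite{xie2024minmax}. The recursive-feasibility and constraint-satisfaction parts are fine.

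There is one genuine gap in the stability argument. You take $V^\star(x_t)=\gamma_{x_t}^\star$ as Lyapunov function and assert $V^\star(x_{t+1})\le V_t(x_{t+1})=\|x_{t+1}\|_{P_{x_t}^\star}^2$, and in the feasibility part you suggest the candidate $\gamma=V_t(x_{t+1})$. Neither step is justified: $\gamma$ appears in the $\gamma I$ blocks of \eqref{SDP:con2}, so you \emph{cannot} shrink $\gamma$ below $\gamma_{x_t}^\star$ while keeping $H,L,\alpha,\lambda$ fixed and expect \eqref{SDP:con2} to survive. The only candidate value that is guaranteed feasible is $\gamma=\gamma_{x_t}^\star$, which gives $\gamma_{x_{t+1}}^\star\le\gamma_{x_t}^\star$ but not the strict quantitative decrease $\gamma_{x_{t+1}}^\star\le\gamma_{x_t}^\star-\lambda_{\min}(Q)\|x_t\|^2$ you need.

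The paper sidesteps this by choosing the Lyapunov function $V(x_t)=\|x_t\|_{P_{x_t}^\star}^2$ rather than $\gamma_{x_t}^\star$. Then the decrease $\|x_{t+1}\|_{P_{x_t}^\star}^2\le\|x_t\|_{P_{x_t}^\star}^2-\lambda_{\min}(Q)\|x_t\|^2$ follows directly from \eqref{Theorem1:proofbasicin}, and one only needs $\|x_{t+1}\|_{P_{x_{t+1}}^\star}^2\le\|x_{t+1}\|_{P_{x_t}^\star}^2$, which is the optimality comparison carried out in \cite[Lemma~1]{xie2024minmax}. The upper and lower quadratic bounds on $V$ are $\|x_t\|_Q^2\le V(x_t)\le\|x_t\|_{P_{x_0}^\star}^2$, obtained from \eqref{Theorem1:proofbasicin} and from the monotonicity just established. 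Switching to this Lyapunov function fixes your argument with no other changes.
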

\begin{proof}
The proof is analogous to that of Lemma 1, Theorem 2 and 3 in \cite{xie2024minmax}, so we only provide a sketch of proof here.
The state at time $t+1$ is given by $x_{t+1}=(A_s+B_sF_{x_t}^\star)x_t+\Delta_t(C+DF_{x_t}^\star)x_t$.
Since $F_{x_t}^\star$ and $P_{x_t}^\star$ are derived from the optimal solution of problem \eqref{SDP}, replacing $F$ and $P$ with $F_{x_t}^\star$ and $P_{x_t}^\star$, the inequality \eqref{Theorem1:proofbasicin} holds for any $(A, B)\in \Sigma$ and any $\Delta_{t+k}\in\Pi, k\in\mathbb{N}$ as proved in Theorem~1.
Since $(A_s, B_s)\in\Sigma$ and $\Delta_t\in\Pi$, setting $k=0$ in inequality \eqref{Theorem1:proofbasicin} and multiplying both sides of the resulting inequality with $x_t^\top$ and $x_t$, we obtain 
$\|x_{t+1}\|_{P_{x_t}^\star}^2\!-\|x_t\|_{P_{x_t}^\star}^2 \!\leq\! -x_t^\top(Q+F_{x_t}^\star R F_{x_t}^\star)x_t\!\leq\! -\lambda_{\min}(Q)\|x_t\|^2$,
where $\lambda_{\min}(Q)$ is the minimal eigenvalue of the matrix $Q$.
Since $\|x_{t+1}\|_{P_{x_t}^\star}^2\leq \|x_t\|_{P_{x_t}^\star}^2 \leq \gamma_{x_t}^\star$, the optimal solution at time $t$ is a feasible solution of the problem \eqref{SDP} at time $t+1$.
Since $P_{x_t}^\star$ is a feasible solution while $P_{x_{t+1}}^\star$ is the optimal solution of the problem \eqref{SDP} given the state $x_{t+1}$, we have $\|x_{t+1}\|_{P_{x_{t+1}}^\star}^2\leq \|x_{t+1}\|_ {P_{x_t}^\star}^2$.
Choosing the Lyapunov function as $V(x_t)=\|x_t\|_{P_{x_t}^\star}^2$, which is lower bounded by $\|x_t\|_Q^2$ and upper bounded by $\|x_t\|_{P_{x_0}^\star}^2$, we have $V(x_{t+1})-V(x_t)\leq -\lambda_{\min}(Q)\|x_t\|^2$.
This proves that the origin is exponentially stabilized in closed loop.
The proof for constraint satisfaction proceeds in two steps: 1) proving that the set $\mathcal{E}=\{x\in\mathbb{R}^n:x^\top P x \leq \gamma\}$ is a robust positive invariant set for the system %$x_{t+1}=(A+BF)x_t+\Delta(C+DF)x_t$ 
with $(A, B)\in\Sigma$ and $\Delta_t\in\Pi$;
%if \eqref{SDP:con1}-\eqref{SDP:con3} hold.
%This ensures that all future states of the systems starting from within $\mathcal{E}$ remain in $\mathcal{E}$.
2) showing that if the second the third LMIs in \eqref{SDP:con1} hold, then all state $x\in\mathcal{E}$ and the input $u=F_x^\star x, x\in\mathcal{E}$ satisfies the state and input constraints. 
These steps are analogous to \cite{xie2024minmax}.
%The proof is similar to that of Lemma 1 and Theorem 2 in \cite{xie2024minmax}.
\end{proof}

% \begin{remark}\upshape
% We propose a data-driven min-max MPC scheme to stabilize the LPV system for any scheduling signal satisfying Assumption~\ref{assumption:delta} and all system matrices consistent with the data.
% The proposed method is conservative because the scheduling signal $\Delta_t$ is unavailable to us during data collection and closed-loop measurements. 
% For larger uncertainty on $\Delta_t$, the size of the consistent set $\Sigma$ characterized by the input-state data grows, leading to an overly robust control approach.
% %This may result in infeasibility of the SDP problem \eqref{SDP}, which means there does not exist any state-feedback gain stabilizes the system with any scheduling signals satisfying the Assumption~\ref{assumption:delta} and all system matrices consistent with the data.
% %Extending the proposed method by incorporating estimates of the scheduling signals with additional uncertainty bounds is an interesting future direction. 
% %Our framework provides a basis to characterize consistent system matrices set using the uncertainty bounds on the scheduling signals and design a gain-scheduling state-feedback controller using the estimation of the scheduling signals for this setting.
% \end{remark}

\begin{remark}\upshape
Our framework can be extended to process noise in the system dynamics, i.e., $x_{t+1}=A_sx_t+B_su_t+\omega_t+d_t$, where $d_t$ is bounded process noise, following a similar approach in \cite{xie2024minmaxrobust}.
By assuming that $d_t$ satisfies a QMI, a combined QMI on $\omega_t+d_t$ can be constructed using the bound on $d_t$ as well as Assumption~\ref{assumption:delta}.
%We can make an additional assumption on $d_t$ that it satisfies a QMI.
%The first step is to construct a QMI on $\omega_t+d_t$ using this additional assumption and the Assumption \ref{assumption:delta}.
With this QMI, the consistent system matrices can be characterized analogous to the proposed approach.
Moreover, we conjecture that also process noise during online operation can be considered, in which case robust stability of a robust positive invariant set around the origin can be shown, compare \cite{xie2024minmaxrobust}. 
\end{remark}

\section{Simulation}\label{sec:5}
In this section we apply the proposed approach in simulation.
We consider the discrete-time angular positioning system from \cite{kothare1996robust}
\begin{equation}\label{system:simulation}
x_{t+1}=\begin{bmatrix}
    1 &0.1\\
    0 &1-0.1a_t
\end{bmatrix}x_t+
\begin{bmatrix}
    0\\ 0.0787
\end{bmatrix}u_t,
\end{equation}
where $x_t$ represents the angular position and the angular velocity of the antenna, $u_t$ is the motor voltage and $a$ is proportional to the friction in the rotating parts of the antenna.
This system can be formulated as in \eqref{system} via 
\begin{equation}\nonumber
\begin{aligned}
A_s\!\!=\!\!\begin{bmatrix}1 \!\!\!\!&0.1\\0 \!\!\!\!&1\end{bmatrix}, B_s\!\!=\!\!\begin{bmatrix}0\\0.0787\end{bmatrix}\!,\!
C\!\!=\!\!\begin{bmatrix}
0 \!\!\!\!\!&0\\
0 \!\!\!\!\!&-0.1
\end{bmatrix}, D\!\!=\!0,\!\Delta_t\!\!=\!\!\begin{bmatrix}a_t \!\!\!\!\!&0\\0 \!\!\!\!\!&a_t\end{bmatrix}.
\end{aligned}
\end{equation}
In the following, we consider $a_t\in[0.05, 0.05+0.05c]$ with varying values of $c$ in the range $0<c\leq 2$ in order to study the influence of $c$ on the closed-loop performance. 
Each value of $c$ results in a corresponding QMI for $\Delta_t$ and produces the following matrices defined in \eqref{assum:delta}
\begin{equation}\nonumber
    \begin{aligned}
        &G_{11}\!=\!-0.25(1+c)I,G_{22}\!=\!-100I, 
G_{12}\!=\!(5+2.5c)I.
    \end{aligned}
\end{equation}
Several sequences of input-state data of length $T=20$ are generated for different values of $c$, where the scheduling signal is sampled uniformly in $[0.05, 0.05+0.05c]$ and the input is sampled uniformly in $[-1, 1]$.
The weight matrices of the data-driven min-max MPC scheme are chosen as $Q= I, R=0.01I$.
The matrices for constraints are chosen as $S_x=4I$ and $S_u=0.16$.
The initial state is chosen as $x_0=[0.05,0]$.

We apply the proposed data-driven min-max MPC scheme to the system \eqref{system:simulation} using the generated sequences of data.
All the closed-loop trajectories converge to the origin after a few iterations and the input and state constraints are satisfied.
%Further, we apply our approach to the min-max MPC scheme from \cite{kothare1996robust}. 
%In the setup of \cite{kothare1996robust}, $A_s$, $B_s$ and the range of $\Delta_t$ are known, while the exact value of $\Delta_t$ is not known. 
%On the other hand, recall that in our setup, $A_s$, and $B_s$ are unknown. 
%We apply the proposed data-driven min-max MPC scheme. 
Fig.~1(a) illustrates the closed-loop cost over $100$ iterations with different values of $c$. 
%The blue line is the closed-loop cost of the proposed data-driven min-max MPC scheme and the green line is the closed-loop cost of the min-max MPC scheme from \cite{kothare1996robust}. 
%The closed-loop costs of the proposed scheme are larger than that of the min-max MPC scheme from \cite{kothare1996robust}.
%This is because in \cite{kothare1996robust}, a precise model is known, while the proposed scheme only require a sequence of input-state data.
As $c$ increases, the closed-loop cost increases due to the increasing robustness requirements.
%The larger range of the scheduling signal introduces more uncertainty to the system dynamics.
%Consequently, more conservative controller are designed to ensure stability for all possible scheduling signals, resulting in a closed-loop performance.
Furthermore, we study the influence of data length $T$ on the closed-loop cost. 
We set $c=1$ and vary the data length $T$, while other parameters including the constraints and weight matrices remain the same.
Fig.~1(b) shows the closed-loop cost over 100 iterations with different values of $T$.
When $T\leq 2$, the SDP problem \eqref{SDP} is infeasible. 
As the data length increases, the closed-loop cost decreases.

\begin{figure}
    \centering
    \subfigure[]{\label{pic:state1_result}
    \includegraphics[width=0.45\textwidth]{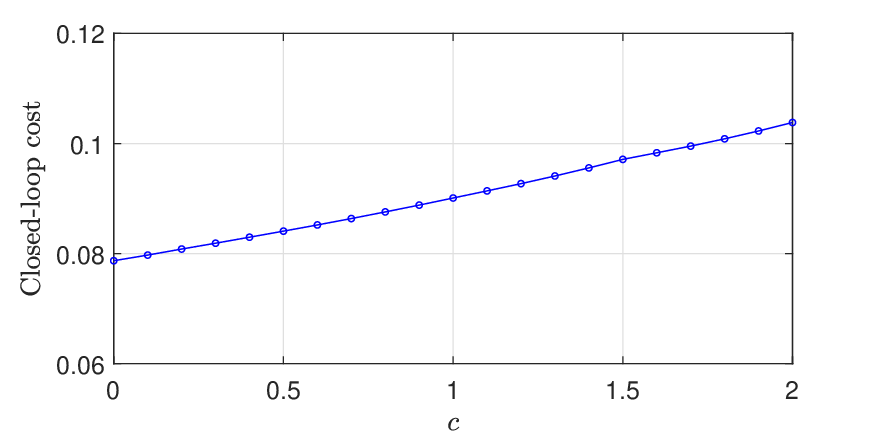}}
    \subfigure[]{\label{pic:state2_result}
    \includegraphics[width=0.45\textwidth]{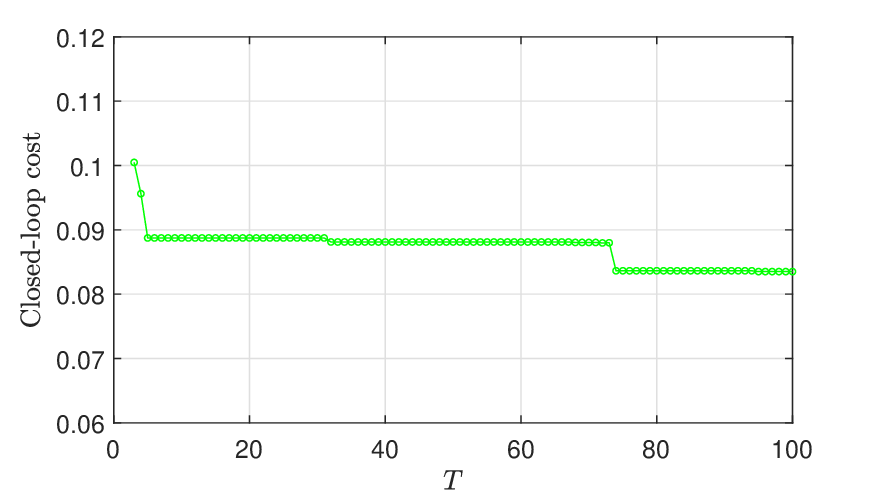}}
    \caption{Closed-loop cost over $100$ iterations under the proposed data-driven min-max MPC scheme with different values of the bound $c$ (subfigure (a)) and the data length $T$ (subfigure (b)).}
    \label{pic:simu_LTI}
    \vspace{-30pt}
\end{figure}

\section{Conclusion}\label{sec:6}
In this paper, we proposed a data-driven min-max MPC scheme for LPV systems with unknown scheduling signals.
%The scheduling signal was assumed to satisfy a QMI. 
Based on a novel data-driven characterization of the consistent system matrices set, an SDP was formulated to minimize an upper bound on the data-driven min-max MPC problem.
The SDP leads to a state-feedback gain that stabilizes the closed-loop system for any scheduling signal satisfying the QMI and any consistent system matrices.
The simulation results show the effectiveness of the proposed scheme.
In the future, we aim to reduce conservatism by incorporating estimates of the scheduling signals and uncertainty bounds.

\bibliographystyle{unsrt}
\bibliography{main}
 
\end{document}